\providecommand{\keywords}[1]{\textbf{\textit{key words : }} #1}
\newcommand{\R}{\mathbb{R}}
\newcounter{exocount}
\newcounter{questcount}
\newcommand{\Rr}{\mathbb{R}}
\renewcommand {\epsilon}{\varepsilon}
\renewcommand {\leq}{\leqslant}
\renewcommand {\geq}{\geqslant}
\theoremstyle{definition}
\theoremstyle{plain}
\newtheorem{exple}{Example}[section]
\newtheorem{thm}{Theorem}[section]
\newtheorem{lemma}[thm]{Lemma}
\newtheorem{rque}{Remark} [section]
\newtheorem{cor}[thm]{Corollary}
\title{Analytical solution of $k$th price auction}
\author[a]{Martin Mihelich\thanks{mmihelich@openpricer.com}}
\author[b]{Yan Shu\thanks{ yan.shu@walnut.ai} }
\affil[a]{Open Pricer}
\affil[b]{Walnut Algorithms}
\begin{document}
\maketitle

\begin{abstract}
We provide an exact analytical solution of the Nash equilibrium for the $k$th price auction by using inverse of distribution functions. As applications, we identify the unique symmetric equilibrium where the valuations have
polynomial distribution, fat tail distribution and exponential distributions.
\end{abstract}

\keywords
Vickrey auctions, $k$th price auctions, inverse distribution functions

\section{Introduction}

In a $k$th price auction with $k$ or more bidders, the highest bidder wins the bid and pays the $k$th highest bid as price. The $k$th price auction has been studied by many researchers in recent years. Readers can refer to \cite{PK99, KRISHNA20101,EM04, WILSON92} for related literature. The Revenue Equivalence Theorem (RET) (see \cite{Roger1981}, \cite{RS1981}) can be used to characterize equilibrium strategies of $k$th price auction. Monderer and Tennenholtz \cite{monderer2000k} proved the uniqueness of the  equilibrium strategies in $k$th price auctions for $k=3$. Under some regularity assumptions, they also provided a characterization equation of the equilibrium bid function (see theorem \ref{thm:MT} below). Wolfstetter \cite{wolfstetter01third} solved the equilibrium $k$th price auctions for a uniform distribution. Recently, Nawar and Sen \cite{nawar2018kth} represented the solution of Monderer and Tennenholtz's characterization equation as a finite series involving Catalan numbers. With their representation, they provided a closed form solution of the unique symmetric, increasing equilibrium of a $k$th price auction for a second degree polynomial distribution.

\vskip 5mm

In this paper, we analysis Monderer and Tennenholtz's characterization equation by using a method involving inverse of distribution functions. We provide a new representation of the equilibrium bid function of $k$th price auction with this representation. For applications, we extend Nawar and Sen's results and provide a closed form solution of a $k$th price auction for polynomial distribution, fat tail distribution and exponential distributions.

\vskip 5mm

After recalling the framework of the problem in Section 2, we prove our main result in Section 3. Then, in Section 4,  we compare our result with those of Nawar and Sen. Finally in Section 5 we provide a closed-form solution of the equilibrium bid function for polynomial distribution, exponential distribution and a class of fat tail distributions.

\section{Notations and assumptions}\label{S2}
In this section we present our assumptions and recall the result on the uniqueness of the equilibrium strategies provided by Monderer and Tennenholtz. 
Consider a $k$th price auction with $n$ bidders, where the highest bidder wins, and pays only the $k$th highest bid. Let $k \geq 2$ and $n > k$. We make the  following assumptions as in \cite{RS1981}:
\begin{enumerate}
    \item The valuations $X_i, i = 1,\cdots, n$ of the bidders are independent and identically distributed with distribution function $F$.
\item The distribution function $F$ is with values in $I$ where  $I= [0,1]$ or $I=\mathbb{R}^+$.
\end{enumerate}
We also assume that:
\begin{itemize}
\item[(A)]  $F$ is $k-2$ times continuously differentiable and the density function $f:= F'$ satisfying $ \forall x \in I$, $f(x) >0$.
\end{itemize}
We remark that under the assumption of the density function $f$, the quantile function $Q:=F^{-1}$ exists and is well-defined on $(0,1)$.
Note that for an analysis of the 3rd price auction in the literature, the existence and the continuity of the density function are often assumed. It is thus natural to assume $(A)$ holds for the case of general $k$th price auctions.  
We denote by $\beta_i$ the strategy of the bidder $i$ which determines its bid for any value. A strategy profile for $n$ bidders is given by $(\beta_1,...,\beta_n)$. 
A strategy profile is symmetric if $\beta_i$ are all equal to a common strategy $\beta$. A symmetric strategy is increasing if $\beta$ is an increasing function. The equilibrium of a $k$th price auction with a symmetric strategy profile is characterized by the following theorem: 
\begin{thm}\label{thm:MT}[Monderer and Tennenholtz\cite{monderer2000k}]
Let $\beta: [0,1]\mapsto \R^+$. A symmetric strategy profile with common strategy $\beta$ is an equilibrium of the $k$th price auction if and only if the following two conditions hold. \\
(E1) $\beta$ is an increasing function.\\
(E2) For all $x\in [0,1]$:
\begin{equation}\label{eq1}
    \int_{0}^x [x-\beta(y)]F(y)^{n-k}[F(x)-F(y)]^{k-3}f(y)dy = 0.
\end{equation}
Moreover \eqref{eq1} has at most one solution and for such a solution $\beta$ is differentiable for all $x\in Supp(f)$, here $Supp(f)$ denotes the support of the distribution with density $f$.
\end{thm}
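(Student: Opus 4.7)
My plan is to derive both characterization conditions from the bidder's best-response problem. Fix the common strategy $\beta$ (assumed increasing) used by the other $n-1$ bidders, and consider a bidder with valuation $x$ who bids $\beta(z)$. Since $\beta$ is increasing, this bidder wins exactly when $z$ exceeds the maximum $Y_{(1)}$ of the opponents' valuations; upon winning, he pays $\beta(Y_{(k-1)})$, the $(k-1)$st largest bid among the others. Writing the joint density of $(Y_{(k-1)}, Y_{(1)})$ and integrating out $Y_{(1)}$ over $[Y_{(k-1)}, z]$, I obtain the closed form
\begin{equation*}
U(z,x) \;=\; C_{n,k} \int_0^z [x-\beta(y)]\,[F(z)-F(y)]^{k-2}\,F(y)^{n-k}\,f(y)\,dy, \quad C_{n,k} = \frac{(n-1)!}{(k-2)!(n-k)!}.
\end{equation*}

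For the ``only if'' direction, a symmetric equilibrium demands that $z=x$ maximize $U(z,x)$ for every $x$. Differentiating the display in $z$ kills the boundary term through $[F(z)-F(z)]^{k-2}=0$ (which uses $k \geq 3$), and the first-order condition $\partial_z U(z,x)|_{z=x}=0$ reduces exactly to equation \eqref{eq1} up to a strictly positive prefactor. For the ``if'' direction, I would apply \eqref{eq1} at the point $z$ itself, giving $\int_0^z [z-\beta(y)][F(z)-F(y)]^{k-3}F(y)^{n-k}f(y)\,dy=0$; subtracting this identity from the integrand appearing in $\partial_z U(z,x)$ collapses it to
\begin{equation*}
\partial_z U(z,x) \;=\; (x-z)\,\Phi(z), \quad \Phi(z) > 0 \text{ on } \mathrm{Supp}(f),
\end{equation*}
so $U(\cdot,x)$ is strictly increasing on $(0,x)$ and strictly decreasing on $(x,\infty)$, making $z=x$ a global best response. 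Assumption (E1) is what legitimizes the order-statistic identification in the first step.

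For uniqueness and regularity, I would treat \eqref{eq1} as a Volterra-type integral equation for the unknown function $\beta$. Each differentiation in $x$ lowers the exponent $k-3$ of $F(x)-F(y)$ by one while leaving an inhomogeneous term built from $F$ and its derivatives; iterating $k-3$ times and differentiating once more yields an explicit equation of the form $a(x)\beta(x)=b(x)$ with $a(x)>0$ on $\mathrm{Supp}(f)$, which both pins $\beta$ down uniquely and exhibits it as a differentiable function on the support. The total of $k-2$ differentiations exactly exhausts the smoothness of $F$ guaranteed by assumption (A). The main technical obstacle I anticipate is the careful bookkeeping of these iterated differentiations --- in particular, checking that every boundary term vanishes at the right step and that the final coefficient $a(x)$ remains strictly positive on $\mathrm{Supp}(f)$.
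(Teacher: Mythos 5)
First, note that the paper does not prove this statement at all: Theorem \ref{thm:MT} is quoted from Monderer and Tennenholtz \cite{monderer2000k}, so there is no internal proof to compare against. Judged on its own terms, your best-response derivation is the standard (and correct) route to the characterization: the joint law of $(Y_{(k-1)},Y_{(1)})$ for the $n-1$ opponents gives exactly your $U(z,x)$ with $C_{n,k}=\tfrac{(n-1)!}{(k-2)!(n-k)!}$, the boundary term dies because $k\ge 3$, the first-order condition at $z=x$ is \eqref{eq1} up to the positive factor $C_{n,k}(k-2)f(x)$, and the identity $\partial_z U(z,x)=(x-z)\Phi(z)$ obtained by subtracting \eqref{eq1} evaluated at $z$ is the right way to upgrade the FOC to global optimality. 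Your uniqueness/regularity plan (differentiate the Volterra equation $k-2$ times to isolate $\beta$) is also sound in spirit; in fact it is essentially what the paper's own Theorem \ref{thm1} and Lemma \ref{lemma1} carry out, where the change of variable $\tilde x=F(x)$ and the quantile functions $Q$, $\hat Q$ make the ``bookkeeping'' you worry about collapse to differentiating $\int_0^{\tilde x}\hat Q(z)(\tilde x-z)^{k-3}z^{n-k}dz$, with every boundary term vanishing until the last step.

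There are, however, genuine gaps relative to the full statement. (i) You never prove the necessity of (E1): monotonicity of $\beta$ is assumed at the outset in order to identify the winning event with $\{z>Y_{(1)}\}$ and the price with $\beta(Y_{(k-1)})$, so what you actually show is ``equilibrium and increasing $\Rightarrow$ (E2)'', not ``equilibrium $\Rightarrow$ (E1) and (E2)''; the iff claim requires an argument that a symmetric equilibrium strategy must be increasing, which is missing. (ii) In the sufficiency direction you only consider deviations of the form $\beta(z)$, i.e.\ bids in the range of $\beta$; bids above $\sup\beta$, below $\inf\beta$, or inside gaps of the range must be ruled out separately (a short comparison argument, but it has to be said). (iii) In the uniqueness part, the final differentiation that produces the pointwise identity $a(x)\beta(x)=b(x)$ requires continuity of $\beta$ (equivalently of $\hat Q$) at the point --- this is exactly the regularity hypothesis the paper has to impose in Theorem \ref{thm1} and the reason Lemma \ref{lemma1} assumes $\hat Q$ is a genuine function to which the fundamental theorem of calculus applies --- so ``at most one solution'' needs a specification of the class of $\beta$'s and a justification of that last step, not just iterated Leibniz differentiation.
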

According to Theorem \ref{thm:MT}, for a given $F$, if we can compute $\beta$ and show that $\beta$ is differentiable and increasing, then $\beta$ is the unique equilibrium bid function. If the equilibrium bid function $\beta$ exists,  $\beta(X)$ is the random variable representing the equilibrium strategy of bidders with valuation $X$. Furthermore, if $\beta$ is strictly increasing, together with differentiability of $\beta$ and the assumption that $F$ has a density, we can deduce that $\beta(X)$ also has a continuous density function. We denote $\hat F$ the distribution function of $\beta(X)$ (that is, $\hat F(\beta(x)) = F(x)$) and $\hat Q:= \hat F^{-1}$ the quantile function. 

\section{Analysis of equilibrium}
Here we present our main result. We give a closed form solution of equation \eqref{eq1} for $k\geq 3$. With this solution, we are able to find a closed form expression of the bid function for some non linear distributions.   The key idea is, instead of working directly with the distribution function $F$ as in the literature, we use the quantile function $Q$, which can largely simplify the calculus and give a better insight. 
\begin{thm}\label{thm1}
Assume that $\beta:[0,1]\mapsto \R^+$ is an increasing function and $\beta(X)$ has an increasing distribution function with a continuous density. Denote $\tilde x = F(x)$ and $\gamma(\tilde x):= Q(\tilde x)\tilde x^{n-2}$. Then \eqref{eq1} has a unique solution given by
\begin{equation}\label{eq3}
    \beta(x) = \frac{\gamma^{(k-2)}(\tilde x)}{(k-2)!\binom{n-2}{k-2}F(x)^{n-k}},
\end{equation}
where the superscript $(t)$ is the $t^{th}$ order derivative with respect to $\tilde x$.
\end{thm}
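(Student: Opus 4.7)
The plan is to reformulate equation \eqref{eq1} in quantile coordinates via the substitution $\tilde y = F(y)$, then extract the bid function by iterated differentiation. The appeal of passing to $Q$ is that the kernel reduces to the pure power $(\tilde x - \tilde y)^{k-3}$, turning \eqref{eq1} into a Volterra equation of the first kind whose solution is recovered by exactly $k-2$ derivatives in $\tilde x$.

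Executing this: substitute $\tilde y = F(y)$ (so that $f(y)\,dy = d\tilde y$) and $\tilde x = F(x)$, and set $b := \beta\circ Q$. Equation \eqref{eq1} becomes
\[
\int_0^{\tilde x}\bigl[Q(\tilde x) - b(\tilde y)\bigr]\,\tilde y^{n-k}(\tilde x - \tilde y)^{k-3}\,d\tilde y = 0.
\]
Using the Beta identity $\int_0^{\tilde x}\tilde y^{n-k}(\tilde x-\tilde y)^{k-3}\,d\tilde y = \tilde x^{n-2}\,B(n-k+1,k-2)$, I pull $Q(\tilde x)$ out and (recalling $\gamma(\tilde x) = Q(\tilde x)\tilde x^{n-2}$) rewrite the equation as
\[
\gamma(\tilde x)\,B(n-k+1,k-2) \;=\; \int_0^{\tilde x} b(\tilde y)\,\tilde y^{n-k}(\tilde x-\tilde y)^{k-3}\,d\tilde y.
\]
I now differentiate both sides $k-2$ times in $\tilde x$ via Leibniz's rule. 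At each of the first $k-3$ steps the boundary contribution at $\tilde y=\tilde x$ carries a positive power of $(\tilde x-\tilde y)$ and therefore vanishes, so only the kernel derivative survives and accumulates a factor $(k-3)!$; the final differentiation meets a unit kernel and collapses the surviving integral to the boundary value $(k-3)!\,b(\tilde x)\tilde x^{n-k}$. Solving for $b(\tilde x)$ and using $(n-k)!/(n-2)! = 1/[(k-2)!\binom{n-2}{k-2}]$ gives \eqref{eq3} after undoing $b = \beta\circ Q$. This establishes necessity; uniqueness of a solution to \eqref{eq1} is already asserted by Theorem~\ref{thm:MT}, and sufficiency — that \eqref{eq3} actually solves \eqref{eq1} — follows by reversing the chain of equalities, the integration constants vanishing because $Q(0)=0$ together with $n>k$ force $\gamma^{(j)}(0)=0$ for $j=0,\ldots,k-3$.

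The main obstacle I expect is justifying the $k-2$ successive differentiations under the integral and the regularity of $\gamma$. Assumption (A) provides $F\in C^{k-2}$ with $f>0$, and the inverse function theorem then lifts this to $Q\in C^{k-2}((0,1))$, so $\gamma\in C^{k-2}$ and $\gamma^{(k-2)}$ is well-defined. A useful sanity check is the case $k=3$: the kernel is identically $1$, only one derivative is needed, and the formula correctly reduces to $b(\tilde x)=\gamma'(\tilde x)/[(n-2)\tilde x^{n-3}]$, confirming that the boundary-term bookkeeping matches the small-$k$ behaviour.
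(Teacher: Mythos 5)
Your proposal is correct and follows essentially the same route as the paper: pass to quantile coordinates via $\tilde y = F(y)$, evaluate the $x$-term with the Beta function, and recover the bid by differentiating the resulting Volterra-type identity $k-2$ times, your inline Leibniz/boundary-term argument being exactly the content of the paper's Lemma~\ref{lemma1} (the paper merely phrases $b=\beta\circ Q$ as $\hat Q$, the quantile of $\beta(X)$). Your added remarks on sufficiency and on $\gamma^{(j)}(0)=0$ for $j\le k-3$ are a small, sound supplement that the paper leaves implicit.
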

\begin{proof}
Denote
\begin{equation}\label{eq4}
    s(x): = x\int_0^x[F(x)-F(y)]^{k-3}F(y)^{n-k}f(y)dy,
\end{equation}
\begin{equation}\label{eq5}
    w(x): = \int_0^x\beta(y)[F(x)-F(y)]^{k-3}F(y)^{n-k}f(y)dy.
\end{equation}
Note that \eqref{eq1} can be written as $s(x) = w(x)$.\\
Since $F'(y) = f(y)$, making  the transformation $z=F(y)/F(x)$ in \eqref{eq4} we have
\begin{equation}  
    s(x) = xF(x)^{n-2}\int_0^1(1-z)^{k-3}z^{n-k}dz = xF(x)^{n-2}B(n-k+1,k-2),
\end{equation}
where $B(c,d)$ is the beta function with $B(c,d) = (c-1)!(d-1)!/(c+d-1)!$ for positive integers $c$,$d$. It follows that 
\begin{equation}\label{eq6}
    s(x) = \frac{xF(x)^{n-2}}{(k-2)\binom{n-2}{k-2}}.
\end{equation}
Make the transformation $z = F(y)$ in \eqref{eq5}. As $\beta$ is increasing, we have $F(y) = \hat F(\beta(y))$, so that 
$$\beta(y)= \hat{F}^{-1}(F(y)) = \hat Q(F(y)) = \hat Q(z).$$
Thus,
\begin{equation}\label{eq7}
     w(x) = \int_0^{F(x)}\hat Q(z)[F(x)-z]^{k-3}z^{n-k}dz.
\end{equation}
As $\tilde x = F(x)$, we have $x = Q(\tilde x)$. Denote
$$
S(\tilde x) := s(Q(\tilde x)) \ \text{and} \ W(\tilde x):= w(Q(\tilde x)).
$$
Also recall that $\gamma(\tilde x):= Q(\tilde x)\tilde x^{n-2}$. Then from \eqref{eq6}:
\begin{equation}\label{eq9}
    S(\tilde x) = s(Q(\tilde x))=s(x) = \frac{\gamma(\tilde x)}{(k-2)\binom{n-2}{k-2}}.
\end{equation}
By \eqref{eq7}:
\begin{equation}\label{eq10}
    W(\tilde x):= w(Q(\tilde x)) = w(x) = \int_0^{\tilde x} \hat Q(z)(\tilde x -z)^{k-3}z^{n-k}dz.
\end{equation}
According to \eqref{eq1}, $S(\tilde x) = W(\tilde x)$. Noticing that assumption (A) ensures that $S(\tilde x)$ is $k-2$ times continuously differentiable, we have:
\begin{equation}\label{eqstar}
S^{(k-2)}(\tilde x)= W^{(k-2)}(\tilde x).
\end{equation}
$S^{(k-2)}(\tilde x)$ is known by \eqref{eq9}. To find $W^{(k-2)}(\tilde x)$, apply Lemma \ref{lemma1} of the Appendix. Taking $m = n-k$ and $r=k-3$ in Lemma \ref{lemma1}, it follows that 
$$W^{(k-2)}(\tilde x) = (k-3)!\hat Q(\tilde x)\tilde x^{n-k}.$$
Together with \eqref{eq9} and \eqref{eqstar}, we have:
\begin{equation}\label{eq11}
\frac{\gamma^{(k-2)}(\tilde x)}{(k-2)\binom{n-2}{k-2}} = (k-3)!\hat Q(\tilde x)\tilde x^{n-k}.
\end{equation}
As $\hat F(\beta(x)) = F(x) = \tilde x$, we have $\hat Q( \tilde x) = \hat F^{-1}(\tilde x) = \beta(x)$. Using this, \eqref{eq3} follows from \eqref{eq11}.
\end{proof}
\begin{rque}
By the Leibniz rule for derivation, 
\begin{align*}
\gamma^{(k-2)}(\tilde x)& = \sum_{i=0}^{k-2}\binom{k-2}{i} (\tilde{x}^{n-2})^{(k-2-i)} Q^{(i)}(\tilde x)\\
& = \sum_{i=0}^{k-2}\binom{k-2}{i}\frac{(n-2)!}{(n-k+i)!} \tilde{x}^{n-k+i}Q^{(i)}(\tilde x)
\end{align*}
Recall that $\tilde x = F(x)$, from equation \eqref{eq3}, we have:
\begin{align}\label{eqq4}
    \beta(x)& = \sum_{i=0}^{k-2}\binom{k-2}{i}\frac{(n-k)!}{(n-k+i)!} \tilde{x}^{i}Q^{(i)}(\tilde x)\nonumber\\
    & = Q(\tilde x) +\frac{k-2}{n-k+1} Q'(\tilde x) + \sum_{i=2}^{k-2}\binom{k-2}{i}\frac{(n-k)!}{(n-k+i)!} \tilde{x}^{i}Q^{(i)}(\tilde x)
\end{align}
Since $Q(\tilde x) = F^{-1}(\tilde x) = x$ and $Q'(\tilde x) = 1/F'(x) = 1/f(x)$, for $k\geq 4$ we have
$$\beta(x) =x+\frac{k-2}{n-k+1}\frac{F(x)}{f(x)} + \sum_{i=2}^{k-2}\frac{\binom{k-2}{i}F(x)^i Q^{(i)}(\tilde x)}{(n-k+1)...(n-k+i)} =x+\frac{k-2}{n-k+1}\frac{F(x)}{f(x)}+O\left(\frac{1}{n^2}\right).$$
    This result coincides with the result of Wolfsteller \cite{wolfstetter01third} in $O(\frac{1}{n^2})$. Moreover it agrees with the expression in proposition 3 of  Gadi Fibich and Arieh Gavious's work in \cite{FG2010}.
\end{rque}



\section{Comparison with Nawar and Sen's result}
Applying the revenue equivalence principle for expected payment of a bidder in a $k$th price auction with $n$ bidders, Nawar and Sen (2018)\cite{nawar2018kth} have obtained the following expression of $\beta(x)$:
\begin{equation}
    \beta(x) = \frac{\psi_{k-1}(x)}{(k-2)!\binom{n-2}{k-2}F(x)^{n-k}},
    \label{nawar0}
\end{equation}

where 

\begin{equation}
\psi_{0}(x)=\int_0^x yF(y)^{n-2}f(y)dy \text{ and } \psi_{t+1}(x)=\frac{\psi_{t}'(x)}{f(x)} \text{ for } t=0,1,...
\label{nawar1}
\end{equation}

By making the transformation $z=F(y)$ we have $\psi_0(x)=\int_0^{ \tilde x} Q(z)z^{n-2}dz$.

Thus $$\frac{d \psi_0(x)}{d \tilde x}=Q(\tilde x) \tilde x^{n-2}=\gamma(\tilde x).$$

Also note that as $\tilde x=F(x)$, we have $\frac{d \tilde x}{dx}=F'(x)=f(x)$. By \eqref{nawar1}, we have 

$$ \psi_1(x)=\frac{d \psi_0(x)}{dx}\frac{1}{f(x)}=\frac{d \psi_0(x)}{d \tilde x} \frac{d \tilde x}{dx} \frac{1}{f(x)}=\gamma(\tilde x).$$

Applying the iterative definition of \eqref{nawar1} again, we have:

$$ \psi_2(x)=\frac{d \psi_1(x)}{dx}\frac{1}{f(x)}=\frac{d \psi_1(x)}{d \tilde x} \frac{d \tilde x}{dx} \frac{1}{f(x)}=\frac{ d \gamma(\tilde x)}{d \tilde x}=\gamma^{(1)}(\tilde x).$$

By induction, if $\psi_t(x)=\gamma^{(t-1)}(\tilde x)$, then

$$ \psi_{t+1}(x)=\frac{d \psi_t(x)}{dx}\frac{1}{f(x)}=\frac{d \psi_t(x)}{d \tilde x} \frac{d \tilde x}{dx} \frac{1}{f(x)}=\frac{ d \gamma^{(t)}(\tilde x)}{d \tilde x}.$$

This shows that $\psi_t(x)=\gamma^{(t-1)}(\tilde x)$ for $t=1,2,....$ So $\psi_{k-1}(x)=\gamma^{(k-2)}(\tilde x)$ and the expression \eqref{nawar0} coincides with our expression \eqref{eq3}. Therefore, our expression \eqref{eq3} is an equivalence representation of Nawar and Sen's result. Instead of expanding $\psi$ with series about Catalan numbers, we compute it with the quantile function. From this expression \eqref{eq3}, it is easy to establish  the equilibrium for some non linear distributions, and we will detail them in the next section.

\section{Examples of Equilibrium for some distributions}
\subsection{Equilibrium for non linear distribution}
 In this section we study the equilibrium bid function for some non-linear distributions. First of all, as a corollary of theorem \ref{thm1}, we provide sufficient conditions for the existence and uniqueness of the equilibrium bid function. Then, we will provide a closed-form solution of the equilibrium bid function for polynomial distribution, exponential distribution, a class of fat tail distributions.
 
 According to theorem \ref{thm:MT}, for some distribution $F$, if one can show that $\beta$ found by \eqref{eq3} is an strictly increasing function, it will follow that the symmetric strategy profile with common strategy $\beta$ is an equilibrium of the $k$th price auction. 
\begin{cor}\label{sufficient condition}
Consider a $k$th price auction where each $X_i$ is i.i.d. on the interval $[0,1]$ with quantile function $Q$. Assume that it holds
\begin{equation}
    \forall i \in [|1,k-1|], \,  \forall \tilde x \in [0,1), \,  Q^{(i)}( \tilde x) > 0.
    \label{derivate1}
\end{equation}
Then the existence and uniqueness of equilibrium bid function is given by \eqref{eq3} and can be rewritten as 
\begin{equation}\label{eqq3}
    \beta(x) = x + \sum_{i=1}^{k-2}\binom{k-2}{i}\frac{(n-k)!}{(n-k+i)!} \tilde{x}^{i}Q^{(i)}(\tilde x),
\end{equation}
with $\tilde x = F(x)$.
\end{cor}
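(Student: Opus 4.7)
The plan is to apply Theorem \ref{thm:MT}: I take $\beta$ to be the right-hand side of \eqref{eqq3} (equivalently \eqref{eq3}) and verify both conditions (E1) and (E2). Uniqueness of the equilibrium then follows automatically from the ``at most one solution'' clause of Theorem \ref{thm:MT}.

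For (E1), I view $\beta$ as $B(\tilde x)$ with $\tilde x = F(x)$, writing
\begin{equation*}
B(\tilde x) := \sum_{i=0}^{k-2} c_i\, \tilde x^i Q^{(i)}(\tilde x), \qquad c_i := \binom{k-2}{i}\frac{(n-k)!}{(n-k+i)!} > 0.
\end{equation*}
Since $d\tilde x/dx = f(x) > 0$, strict monotonicity of $\beta$ on $[0,1]$ is equivalent to $B'(\tilde x) > 0$ on $[0,1)$. Differentiating termwise,
\begin{equation*}
B'(\tilde x) = c_0\, Q'(\tilde x) + \sum_{i=1}^{k-2} c_i \bigl(i\, \tilde x^{i-1} Q^{(i)}(\tilde x) + \tilde x^i Q^{(i+1)}(\tilde x)\bigr).
\end{equation*}
Under hypothesis \eqref{derivate1}, every $Q^{(j)}(\tilde x)$ with $j \in \{1,\ldots,k-1\}$ is strictly positive on $[0,1)$. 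Consequently each bracketed term is non-negative and the leading $c_0 Q'(\tilde x) = Q'(\tilde x)$ is strictly positive, so $B'(\tilde x) > 0$ and $\beta$ is strictly increasing.

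For (E2), I revisit the derivation of Theorem \ref{thm1}. With $S(\tilde x)$ and $W(\tilde x)$ as defined in that proof, equation \eqref{eq1} reads $S = W$, and the argument of Theorem \ref{thm1} shows that the ansatz \eqref{eq3} is precisely what enforces $S^{(k-2)}(\tilde x) = W^{(k-2)}(\tilde x)$. To upgrade this identity of $(k-2)$-th derivatives to $S = W$, it suffices to verify the boundary conditions $S^{(j)}(0) = W^{(j)}(0) = 0$ for $j = 0, 1, \ldots, k-3$. From the closed form $S(\tilde x) = Q(\tilde x)\tilde x^{n-2}/((k-2)\binom{n-2}{k-2})$ combined with $n > k$, a Leibniz expansion shows that each summand of $(Q \tilde x^{n-2})^{(j)}$ retains a positive power of $\tilde x$, hence vanishes at $0$. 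Differentiating $W(\tilde x) = \int_0^{\tilde x} \hat Q(z)(\tilde x - z)^{k-3} z^{n-k}\, dz$ at most $k-3$ times under the integral sign leaves a factor $(\tilde x - z)^{k-3-j}$ inside an integral that vanishes at $\tilde x = 0$.

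The main technical step, I expect, is the monotonicity check (E1): one must confirm that no cancellation can occur in the expression for $B'(\tilde x)$ and that the boundary $\tilde x = 0$ is handled, where the sum collapses to $B'(0) = (1 + c_1) Q'(0) = \tfrac{n-1}{n-k+1}\, Q'(0) > 0$. The hypothesis \eqref{derivate1} is engineered so that every summand in $B'(\tilde x)$ has the correct sign; after that, the verification of (E2) is just bookkeeping of initial conditions at $\tilde x = 0$, and Theorem \ref{thm:MT} does the rest.
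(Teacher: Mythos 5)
Your proposal is correct and follows essentially the same route as the paper: obtain \eqref{eqq3} from the Leibniz expansion \eqref{eqq4}, use hypothesis \eqref{derivate1} to get strict monotonicity of $\beta$ (the paper argues via each $\tilde x^i Q^{(i)}(\tilde x)$ being increasing rather than by differentiating $B$, but this is the same idea), and then invoke Theorems \ref{thm:MT} and \ref{thm1}. Your explicit verification of (E2) — matching the $(k-2)$th derivatives of $S$ and $W$ and checking the vanishing boundary data $S^{(j)}(0)=W^{(j)}(0)=0$ for $j\le k-3$ — is a welcome extra level of care that the paper leaves implicit in the statement of Theorem \ref{thm1}.
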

\begin{proof}
Equation \eqref{eqq3} is direct from equation \eqref{eqq4}.
By equation \eqref{derivate1}, we deduce that for $i = 0,\cdots,k-2$, $Q^{(i)}(\tilde x)$ is strictly increasing with respect to $\tilde x$. Together with the fact that $\tilde x = F(x)$ is increasing strictly with respected to $x$, according to equation \eqref{eqq4}, we deduce that $\beta(x)$ is strictly increasing.
 The conclusion follows.
\end{proof}

For some distributions, the condition $Q^{(i)}(\tilde{x})>0$ for all $\tilde x \in [0,1)$ is easy to check but for some it is not. For many non-linear distributions, the quantile function $Q$ is analytic on $[0,1)$. According to the latter corollary, we provide another sufficient condition as follows, the condition $(P^+)$, which is easy to check.

{\bf Condition $(P^+)$} : The function $Q$ is analytic on $[0,1)$ with positive coefficient in the representation of power series. More precisely, there exist positive $\alpha_i$, such that for $x\in [0,1)$, 
$$Q(\tilde x)= \sum_{i=0}^\infty \alpha_i \tilde{x}^i.$$
The equality is defined in the sense of power series, reader can refer to any power series literature for a complete justification of technical convergence details. From condition $(P^+)$, deriving the equation successively, it is easy to see that equation \eqref{derivate1} holds. According to corollary~\ref{sufficient condition}, $\beta$ is an equilibrium bid function. To check the condition $(P^+)$, we only need to calculate the Taylor expansion of the quantile function $Q$ on $0$, then check the sign of the Taylor coefficient. Here are some examples of applications of corollary~\ref{sufficient condition}.

\begin{exple}[Exponential distribution.]
Let $F(x) := 1-e^{-\lambda x }$ for $\lambda>0$ and $x\in \Rr^+$. The equilibrium bid function is given by \eqref{eq3}. 
In fact $Q(x) := \frac{-1}{\lambda} \ln (1-x)$. Moreover $Q^{(i)}(x) = \frac{i!}{\lambda(1-x)^i}$ for $i\in [|1,k-1|]$, which is strictly positive on $[0,1)$. According to Corollary \ref{sufficient condition}, the equilibrium bid function $\beta(x)$ has the expression:

$$\beta(x)=  x +\frac{1}{\lambda} \sum_{i=1}^{k-2}\binom{k-2}{i} \frac{(n-k)!}{(n-k+i)!}\tilde x ^{i} \frac{i!}{^(1-\tilde x)^i} ,$$
where $\tilde x =F(x)$.
\end{exple}

\begin{exple}[Fat tail distribution.]
Let $F(x) := 1-\frac{1}{x^c}$ for some $c>0$ and for $x\in \Rr^+$. The equilibrium bid function is given by \eqref{eq3}. 
In fact $Q(x):=\frac{1}{(1-x)^c}$. Moreover $Q^{(i)}(x) = \frac{c(c+1)...(c+i-1)\tilde x^i}{(1- \tilde x)^{c+i}}$ for $i\in [|1,k-1|]$, which is strictly positive on $[0,1)$. According to Corollary \ref{sufficient condition}, the equilibrium bid function $\beta(x)$ has the expression:

$$\beta(x)= x+ \sum_{i=1}^{k-2} \binom{k-2}{i} \frac{(n-k)!}{(n-k+i)!}\frac{c(c+1)...(c+i-1)\tilde x^i}{(1- \tilde x)^{c+i}}, $$

where $\tilde x =F(x)$.
\end{exple}

\begin{thm}
Consider a $kth$ price auction where each $X_i$ is iid on the interval $[0,1]$, with distribution function $F(x) = x^\alpha$ where $\alpha >0$. Then there is a unique symmetric equilibrium. The equilibrium common strategy $\beta:[0,1]\mapsto \R^+$ is 
\begin{equation}\label{eqpoly}
    \beta(x) = \frac{\Gamma(n-k+1) \Gamma(n-1+1/\alpha)}{\Gamma(n-1)\Gamma(n-k+1+1/\alpha)} x,
\end{equation}
where $\Gamma$ is the Gamma function.
In particular, if $\alpha = \frac{1}{m}$ where $m$ is a positive integer, 
$$\beta(x) = \frac{(n-k+m+1)...(n-2+m)}{(n-k-+1)...(n-2)}x.$$
\end{thm}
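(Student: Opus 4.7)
The plan is to apply Theorem \ref{thm1} directly with $F(x) = x^\alpha$. For this distribution the quantile function is the pure power $Q(\tilde x) = \tilde x^{1/\alpha}$, so the auxiliary function from Theorem \ref{thm1} reduces to a monomial:
$$\gamma(\tilde x) \;=\; Q(\tilde x)\,\tilde x^{n-2} \;=\; \tilde x^{\,n-2+1/\alpha}.$$
Differentiating $k-2$ times is therefore a one-line computation and produces
$$\gamma^{(k-2)}(\tilde x) \;=\; \frac{\Gamma(n-1+1/\alpha)}{\Gamma(n-k+1+1/\alpha)}\,\tilde x^{\,n-k+1/\alpha}.$$

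Next I would substitute this into the closed-form expression \eqref{eq3}. The miracle of the polynomial case is that $\tilde x = F(x) = x^\alpha$ makes the power $\tilde x^{\,n-k+1/\alpha}$ split cleanly as $F(x)^{n-k}\cdot x$, so the normalizing factor $F(x)^{n-k}$ in the denominator of \eqref{eq3} cancels exactly and $\beta(x)$ comes out proportional to $x$. Rewriting $(k-2)!\binom{n-2}{k-2} = \Gamma(n-1)/\Gamma(n-k+1)$ then yields the Gamma-function formula \eqref{eqpoly}.

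To promote this candidate to an actual equilibrium, I would invoke Theorem \ref{thm:MT}: since $\alpha>0$ each Gamma value is positive, so the coefficient in \eqref{eqpoly} is a strictly positive constant, making $\beta$ a positive linear function, hence strictly increasing and differentiable on $[0,1]$. The existence half of Theorem \ref{thm1} then says $\beta$ solves \eqref{eq1}, and the uniqueness clause of Theorem \ref{thm:MT} certifies it as the unique symmetric equilibrium bid function. Note that this is why I prefer to apply Theorem \ref{thm1} directly rather than Corollary \ref{sufficient condition}: when $1/\alpha>1$ the derivatives $Q^{(i)}$ vanish at $\tilde x=0$, so the sign-of-derivatives hypothesis is not strictly positive on all of $[0,1)$; the monotonicity of $\beta$ itself is, however, obvious from its linearity.

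For the special case $\alpha = 1/m$ with $m$ a positive integer, $1/\alpha = m$ turns all the relevant Gamma values into factorials, $\Gamma(n-1+m) = (n+m-2)!$, $\Gamma(n-1)=(n-2)!$, and similarly for the $n-k+1$ pair, after which the ratio collapses into a finite product of consecutive integers that can be written either as $(n-1)n\cdots(n+m-2)/[(n-k+1)\cdots(n-k+m)]$ or equivalently, after a telescoping rewrite, as the displayed formula $(n-k+m+1)\cdots(n-2+m)/[(n-k+1)\cdots(n-2)]$. There is no conceptual obstacle anywhere; the only thing to be careful with is the bookkeeping of the shifts in the Gamma arguments and the number of factors ($k-2$ in the ratio, $m$ in the equivalent rewrite), and this is where I would double-check the computation.
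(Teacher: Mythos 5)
Your proposal is correct and follows essentially the same route as the paper: compute $Q(\tilde x)=\tilde x^{1/\alpha}$, so $\gamma(\tilde x)=\tilde x^{\,n-2+1/\alpha}$, differentiate $k-2$ times, substitute into \eqref{eq3} with $(k-2)!\binom{n-2}{k-2}=\Gamma(n-1)/\Gamma(n-k+1)$, and conclude uniqueness from the linearity (hence strict monotonicity) of $\beta$ via Theorem \ref{thm:MT}. Your extra remarks on why Corollary \ref{sufficient condition} is avoided and on the factorial rewriting for $\alpha=1/m$ are accurate but not needed beyond what the paper does.
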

\begin{proof}
To prove this, we find $\beta$ using theorem \ref{thm1} and show that it is a strictly increasing function of $x$. 
As  $\tilde x = F(x) = x^\alpha$, it follows $Q(\tilde x) = F^{-1}(\tilde x) = \tilde x^{1/\alpha} = x$. Then 
$$\gamma(\tilde x)= Q(\tilde x)\tilde x^{n-2} = \tilde x^{n-2+1/\alpha}.$$
Therefore,
\begin{align*}
    \gamma^{(k-2)}(\tilde x) &= (n-2+1/\alpha)(n-3+1/\alpha)...(n-k+1+1/\alpha)\tilde x^{n-k+1/\alpha}\\
    & = \frac{\Gamma(n-1+1/\alpha)}{\Gamma(n-k+1+1/\alpha)}F(x)^{n-k}x.
\end{align*}
Together with \eqref{eq3}, \eqref{eqpoly} follows.
\end{proof}

\section{Acknowledgments}
 
the reviewers had a key role in the conception of this article and we are grateful for the reviewers for the very helpful comments.

\section{Appendix}
\begin{lemma}\label{lemma1}
Consider a real valued bounded function $\hat{Q}:\R\mapsto [0,1]$. For positive integer $r$ and positive real number $m$, let $A_r(u,z):= \hat Q(z)(u-z)^r z^m$ and $H_r(u):=\int_0^u A_r(u,z) dz$. Then the $(r+1)^{th}$ derivative of $H_r(u)$ is 
\begin{equation}
    H_r^{(r+1)}(u) = r!\hat{Q}(u)u^m.
\end{equation}
\end{lemma}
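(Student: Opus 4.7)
The plan is to establish a one-step recurrence relation $H_r'(u) = r\, H_{r-1}(u)$ for $r\geq 1$, iterate it $r$ times to reduce to $H_0$, and then close the argument with the Fundamental Theorem of Calculus. Throughout I would use Leibniz's rule for differentiation of an integral with a variable upper limit:
\begin{equation*}
\frac{d}{du}\int_0^u f(u,z)\,dz = f(u,u) + \int_0^u \partial_u f(u,z)\,dz,
\end{equation*}
whose hypotheses are clearly satisfied here since $A_r(u,z) = \hat Q(z)(u-z)^r z^m$ is polynomial in $u$ and $\hat Q$ is bounded, so the interchange of derivative and integral is legitimate on any bounded interval.

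Applied to $H_r$, the boundary contribution is $A_r(u,u) = \hat Q(u)(u-u)^r u^m$, which vanishes for every $r\geq 1$ thanks to the factor $(u-z)^r$ evaluated at $z=u$. Meanwhile $\partial_u A_r(u,z) = r\hat Q(z)(u-z)^{r-1} z^m = r\, A_{r-1}(u,z)$. Combining these two facts gives precisely
\begin{equation*}
H_r'(u) = \int_0^u r\, A_{r-1}(u,z)\,dz = r\, H_{r-1}(u),\qquad r\geq 1.
\end{equation*}

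Now I would proceed by induction (or simply iterate the recurrence): differentiating the identity $H_r' = r H_{r-1}$ repeatedly and plugging the recurrence back in at each stage yields $H_r^{(j)}(u) = r(r-1)\cdots(r-j+1)\, H_{r-j}(u)$ for $1\le j\le r$; taking $j=r$ gives $H_r^{(r)}(u) = r!\, H_0(u) = r!\int_0^u \hat Q(z) z^m\,dz$. One final differentiation, this time just the ordinary Fundamental Theorem of Calculus applied to $H_0$, yields $H_r^{(r+1)}(u) = r!\,\hat Q(u) u^m$, as required.

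The proof is largely mechanical; the only point that requires a moment's care is verifying that the boundary contribution $A_r(u,u)$ vanishes at each stage of the iteration, which holds uniformly because after each differentiation the exponent of $(u-z)$ drops by one but always remains nonnegative until the very last step, where the FTC takes over. No subtle regularity issue arises beyond boundedness of $\hat Q$ together with continuity at $u$ (which is implicit in the application of Lemma \ref{lemma1} within the proof of Theorem \ref{thm1}, where $\hat Q$ is the continuous quantile function of $\beta(X)$).
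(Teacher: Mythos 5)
Your proposal is correct and follows essentially the same route as the paper: Leibniz's rule gives the recurrence $H_r'(u)=rH_{r-1}(u)$ with vanishing boundary term for $r\geq 1$, iterating yields $H_r^{(r)}=r!H_0$, and the final differentiation of $H_0$ (your FTC step, which is the paper's $r=0$ case of the Leibniz rule) gives $r!\hat Q(u)u^m$. No gaps.
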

\begin{proof}
By the Leibniz rule \cite{rudin1964principles} of differentiating an integral, if $H(u):= \int_{l_0(u)}^{l_1(u)} A(u,z) dz$, under assumption of integrability of $\partial_u A(u,z)$, it holds:
$$H'(u) = \left[A(u,l_1(u))l_1'(u) - A(u,l_0(u))l_0'(u)\right] +\int_{l_0(u)}^{l_1(u)} \partial_u A(u,z) dz.$$
It is easy to check the integrability of $\partial_u\hat Q(z)(u-z)^r z^m$, thus taking $l_0(u)=0, l_1(u) = u$ in the previous equation:
\begin{equation}\label{eq30}
    H_r'(u) = A_r(u,u) +\int_0^u \partial_u A_r(u,z) dz.
\end{equation}
For $r=0$, $A_0(u,u) = \hat{Q}(u)u^m$ and $\partial_u A_0(u,z)=0$. For $r\geq 1$, $A_r(u,u) = 0$ and $\partial_u A_r(u,z) = \hat{Q}(z)r(u-z)^{r-1}z^m = rA_{r-1}(u,z)$. Therefore \eqref{eq30} implies
$$H_r'(u) = rH_{r-1}(u) \text{\ for\ } r\geq 1 \text{\ and\ } H_0'(u) = \hat{Q}(u)u^m.$$ 
Thus for $t\leq r$:
$$H_r^{(t)}(u) = r(r-1)...(r-t+1)H_{r-t}(u),$$
so $H_r^{(r)}(u) = r!H_0(u)$ and therefore $H_r^{(r+1)}(u) = r!H_0'(u) = r!\hat{Q}(u)u^m.$
\end{proof}

\begin{thebibliography}{10}

\bibitem{FG2010}
G.~Fibich and A.~Gavious.
\newblock Large auctions with risk-averse bidders.
\newblock {\em International Journal of Game Theory}, 39(3):359--390, 2010.


\bibitem{PK99}
Paul Klemperer.
\newblock {Klemperer, P (1999) Auction theory: a guide to the literature}.
\newblock Journal of Economic Surveys, 13(3):227-286

\bibitem{KRISHNA20101}
Vijay Krishna.
\newblock {\em Auction Theory (Second Edition)}.
\newblock Academic Press, San Diego, second edition, 2010.

\bibitem{EM04}
Eric Maskin.
\newblock The unity of auction theory.
\newblock {\em Journal of Economic Literature}, 42(4):1102--1115, 2004.

\bibitem{monderer2000k}
Dov Monderer and Moshe Tennenholtz.
\newblock $k$-price auctions.
\newblock {\em Games and Economic Behavior}, 31(2):220--244, 2000.

\bibitem{Roger1981}
Roger~B. Myerson.
\newblock Optimal auction design.
\newblock {\em Mathematics of Operations Research}, 6(1):58--73, 1981.


\bibitem{RS1981}
John~G. Riley and William~F. Samuelson.
\newblock Optimal auctions.
\newblock {\em The American Economic Review}, 71(3):381--392, 1981.

\bibitem{vickrey1961counterspeculation}
William Vickrey.
\newblock Counterspeculation, auctions, and competitive sealed tenders.
\newblock {\em The Journal of Finance}, 16(1):8--37, 1961.

\bibitem{WILSON92}
Robert Wilson.
\newblock Wilson R (1992) Strategic analysis of auctions. In: Auman R and Hart S (eds.)
\newblock Handbook og Game Theory with Economic Applications, Vol. 1, Elsevier Science Publishers (North-Holland), pp. 227-279

\bibitem{wolfstetter01third}
E.~Wolfstetter.
\newblock Third- and higher-price auctions.
\newblock In M.~Braulke and S.~Berninghaus, editors, {\em Beiträge zur
  {M}ikro- und zur {M}akroökonomik: {F}estschrift für {H}ans-{J}ürgen
  {R}amser}, pages 495--503. Springer Verlag, 2001.
  
\bibitem{nawar2018kth}
Nawar A-H and Sen D (2018) .
\newblock $k$th price auctions and Catalan numbers.
\newblock Economics Letters, volume 172, pages 69--73, Elsevier, 2018.

\bibitem{rudin1964principles}
Rudin, Walter and others.
\newblock Principles of mathematical analysis.
\newblock Volume 3,  year 1964, McGraw-hill New York.



\end{thebibliography}

\end{document}